\documentclass[11pt]{article}
%This is a template for producing LIPIcs articles. 
%See lipics-manual.pdf for further information.
%for A4 paper format use option "a4paper", for US-letter use option "letterpaper"
%for british hyphenation rules use option "UKenglish", for american hyphenation rules use option "USenglish"
%for section-numbered lemmas etc., use "numberwithinsect"
%for enabling cleveref support, use "cleveref"
%for enabling autoref support, use "autoref"
%for anonymousing the authors (e.g. for double-blind review), add "anonymous"
%for enabling thm-restate support, use "thm-restate"

\usepackage[procnumbered,ruled,vlined, algo2e]{algorithm2e}
\usepackage{todonotes}
\usepackage{ctable}	
\usepackage{amsthm}
\usepackage{amsmath, amssymb, amsfonts, verbatim}

\usepackage[margin=1in]{geometry}

\newtheorem{theorem}{Theorem}
\newtheorem{lemma}{Lemma}[section]

\newtheorem{claim}[lemma]{Claim}

\newtheorem{definition}{Definition}

%\theoremstyle{definition}

%\graphicspath{{./graphics/}}%helpful if your graphic files are in another directory

\bibliographystyle{abbrv}% the mandatory bibstyle

\title{Improved Bounds for Matching in Random-Order Streams} 

%Maximum Matching in Random Streams: Beyond the Lower Bound for Adversarial Streams

%\titlerunning{Beyond the Lower Bound for Adversarial Streams} %TODO optional, please use if title is longer than one line

\author{
	Aaron Bernstein \footnote{bernstei@gmail.com. This work was done while funded by NSF CAREER Grant 1942010 and Simons Collaboration on Algorithms and Geometry.} \\ 
	Rutgers University 
}

%\nolinenumbers %uncomment to disable line numbering

%\hideLIPIcs  %uncomment to remove references to LIPIcs series (logo, DOI, ...), e.g. when preparing a pre-final version to be uploaded to arXiv or another public repository

%%%%%%%%%%%%%%%%%%%%
%%%%% MY COMMANDS
%%%%%%%%%%%%%%%%%%%%%

\newcommand{\stream}[2]{S_{[#1, #2]}}
\DeclareMathOperator*{\ex}{\mathbb{E}}
\newcommand{\eps}{\epsilon}

\newcommand{\eearly}{E^{early}}
\newcommand{\elate}{E^{late}}
\newcommand{\gearly}{G^{early}}
\newcommand{\glate}{G^{late}}

\newcommand{\cond}{\ | \ }
\newcommand{\polylog}{\textrm{polylog}}
\newcommand{\poly}{\textrm{poly}}
\newcommand{\foundu}{{\sc FoundUnderfull}}

\newcommand{\ignore}[1]{}
\newcommand{\true}{True}
\newcommand{\false}{False}

\usepackage{algorithm2e}

\begin{document}

\maketitle

\begin{abstract}
We study the problem of computing an approximate maximum cardinality matching in the semi-streaming model when edges arrive in a \emph{random} order. In the semi-streaming model, the edges of the input graph $G = (V,E)$ are given as a stream $e_1, \ldots, e_m$, and the algorithm is allowed to make a single pass over this stream while using $O(n\polylog(n))$ space ($m = |E|$ and $n = |V|$). If the order of edges is adversarial, a simple single-pass greedy algorithm yields a $1/2$-approximation in $O(n)$ space; achieving a better approximation in adversarial streams remains an elusive open question.

A line of recent work shows that one can improve upon the $1/2$-approximation if the edges of the stream arrive in a random order. The state of the art for this model is two-fold: Assadi et al. [SODA 2019] show how to compute a $\frac{2}{3}$$(\sim.66)$-approximate matching, but the space requirement is $O(n^{1.5}\polylog(n))$. Very recently, Farhadi et al. [SODA 2020] presented an algorithm with the desired space usage of $O(n\polylog(n))$, but a worse approximation ratio of $\frac{6}{11}$$(\sim.545)$, or $\frac{3}{5}$$(=.6)$ in bipartite graphs.

In this paper, we present an algorithm that computes a $\frac{2}{3}(\sim.66)$-approximate matching using only $O(n\log(n))$ space, improving upon both results above. We also note that for adversarial streams, a lower bound of Kapralov [SODA 2013] shows that any algorithm that achieves a $1-\frac{1}{e}$($\sim.63$)-approximation requires $(n^{1+\Omega(1/\log\log(n))})$ space. Our result for random-order streams is the first to go beyond the adversarial-order lower bound, thus establishing that computing a maximum matching is provably easier in random-order streams.
\end{abstract} 

\newpage 

\section{Introduction}
Computing a maximum cardinality matching is a classical problem in combinatorial optimization, with a large number of algorithms and applications. Motivated by the rise of massive graphs, much of the recent research on this problem has focused on sub-linear algorithms that are able to compute a matching without storing the entire graph in memory. One of the standard sub-linear models for processing graphs is known as the \emph{semi-streaming} model \cite{FeigenbaumKMSZ05}: the algorithm has access to a sequence of edges (the stream), and is allowed to make a single pass over this sequence while only using only $O(n\polylog(n))$ internal memory, where $n$ is the number of vertices in the graph. Note that the memory used is still significantly smaller than the number of edges in the graph, and that $O(n)$ memory is also necessary if we want the algorithm to output the actual edges of the matching. (One typically assume $O(\log(n))$-size words, so that a single edge can be stored in $O(1)$ space; if one were to express space in terms of the number of bits, all the space bounds in this paper would increase by a $O(\log(n))$ factor.)

If the edges of the stream arrive in an arbitrary order, a simple greedy algorithm can compute a maximal matching -- and hence a $1/2$-approximate maximum matching -- in a single streaming pass and $O(n)$ space. Going beyond a $1/2$-approximation with a single pass is considered one of the main open problems in the area. The strongest lower bound is by Kapralov \cite{Kapralov13}, who build upon an earlier lower bound of Goel et al. \cite{GoelKK12}: any algorithm with approximation ratio $ \geq 1-1/e \sim .63$ requires $n^{1+\Omega(1/\log\log(n))})$ space \cite{Kapralov13}. But we still do not know where the right answer lies between $1/2$ and $1-1/e$. 

To make progress on this intriguing problem, several recent papers studied a more relaxed model, where the graph is still arbitrary, but the edges are assumed to arrive in a \emph{uniformly random} order. Konrad et al. were the first to go beyond a 1/2-approximation in this setting: they showed that in random-order streams, there exists an $O(n)$-space algorithm that computes an .5003-approximate matching, or .5005-approximate for bipartite graphs \cite{KonradMM12}. This was later improved to .506 in general graphs \cite{GamlathKMS19} and .539 in bipartite graphs \cite{Konrad18}. Assadi et al. then showed an algorithm with an approximation ratio of $(2/3 - \eps) \sim .66$, but their algorithm had a significantly larger space requirement of $O(n^{1.5}\polylog(n))$ \cite{AssadiBBMS19}. Finally, very recently (SODA 2020), Farhadi et al. achieved the current state of the art for $O(n\polylog(n))$ space; their algorithm achieves an approximation ratio of $6/11 \sim .545$ for general graphs and $3/5 = .6$ for bipartite graphs \cite{FarhadiHMRR20}. A summary of these results can be found in Table \ref{table}.

\begin{table}
\centering
\setlength{\tabcolsep}{10pt}
%\resizebox{9.5cm}{!}{%
\begin{tabular}{l llc} 
\toprule
& \multicolumn{2}{c}{\textsc{Approximation Factor}} & \\
%\cmidrule(r){2-3}
& \textbf{Bipartite graphs} & \textbf{General graphs} & \textbf{Space} \\
%\midrule

%\citeboth{DBLP:conf/soda/KapralovKS14} & $\frac{1}{\pl n}$ & $\frac{1}{\pl n}$ & $O(\pl n)$ \\
%&&&\\[-1.5em] 
\midrule

Konrad et al. \cite{KonradMM12} & $0.5005$ & $0.5003$ & $O(n)$ \\

Gamlath et al. \cite {GamlathKMS19} & $0.512$ & $0.506$ & $O(n\cdot \polylog(n))$ \\
 
Konrad \cite{Konrad18} & $0.539$ & - & $O(n\cdot\polylog(n))$ \\

Assadi et al. \cite{AssadiBBMS19} & $0.666$ & $0.666$ & $O(n^{1.5}\cdot\polylog(n))$ \\

Farhadi et al. \cite{FarhadiHMRR20} & $0.6$ & $0.545$ & $O(n\cdot\polylog(n))$ \\

\textbf{This paper} & $\mathbf{0.666}$ & $\mathbf{0.666}$ & $O(n\log(n))$ \\

%\citeboth{DBLP:conf/soda/AssadiBBMS19} & $0.667$ & $0.667$ & $\tilde{O}(n^{1.5})$ \\
\bottomrule
\end{tabular}
%}%
%\caption{Streaming algorithms known for the maximum matching when edges arrive in a random order.}
\caption{Single-pass semi-streaming algorithms known for the maximum matching when edges arrive in a random order. The space bounds are expressed in terms of $O(\log(n))$-size words, though many existing results do not state the exact $\polylog(n)$ term. The result of Gamlath et al. \cite{GamlathKMS19} works in weighted graphs; all others are restricted to unweighted graphs.}
%single-pass semi-streaming algorithm
\label{table}
\vspace{-0.1cm}
\end{table}

Although this line of work suggests that computing a maximum matching might be fundamentally easier in random-order streams, we note that even in bipartite graphs, none of the previous results go beyond the best known lower bound for adversarial streams mentioned above \cite{Kapralov13}: the algorithm of Assadi et al. uses too much space $(n^{1.5} \gg n^{1+1/\log\log(n)})$, while the result of Farhadi et al. has an approximation ratio of $.6 < 1-1/e$. 

Our result is the first to go beyond the adversarial-order lower bound, thus establishing that computing a matching is provably easier in random-order streams.

\begin{theorem} [Our Result]
\label{thm:main}
Given any (possibly non-bipartite) graph $G$ and any approximation parameter $1 > \eps > 0$, there exists a deterministic single-pass streaming algorithm that with high probability computes a $(2/3 - \eps)$-approximate matching if the edges of $G$ arrive in a uniformly random order. The space usage of the algorithm is $O(n\log(n)\poly(\eps^{-1}))$
\end{theorem}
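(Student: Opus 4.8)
The strategy is to maintain a single matching $M$, grow it greedily over the stream, and repeatedly improve it by finding many vertex-disjoint augmenting paths of length three; the random arrival order is used both to guarantee that such paths are always abundant and to show that only a tiny amount of auxiliary information per edge of $M$ needs to be remembered, which is what brings the space down to near-linear. Write $\mu = \mu(G)$. First comes a reduction: if running greedy over the entire stream already yields a maximal matching of size at least $(2/3-\eps)\mu$ we are done, so from now on we may assume that every matching the algorithm holds has size $\alpha\mu$ with $\alpha < 2/3-\eps$; moreover every such matching is maximal, since both greedy edge-additions and augmentations along vertex-disjoint length-three paths only enlarge the set of covered vertices. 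The second ingredient is a combinatorial structural lemma: fixing a maximum matching $M^*$ and examining the alternating paths and cycles of $M \triangle M^*$, maximality kills all length-one augmenting paths, there are at least $|M^*|-|M| = (1-\alpha)\mu$ augmenting paths overall, and every augmenting path of length at least five uses at least two edges of $M$, so at most $\tfrac{\alpha}{2}\mu$ of them are long; hence $G$ contains at least $(1-\tfrac{3\alpha}{2})\mu > \tfrac{3\eps}{2}\mu$ vertex-disjoint length-three augmenting paths with respect to $M$. Thus, as long as we have not reached ratio $2/3-\eps$, there is a supply of $\Omega(\eps\mu)$ short augmenting paths waiting to be discovered.

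The algorithm reads the stream in $T = \poly(\eps^{-1})$ phases; by the random order, the $i$-th phase sees a uniformly random block of edges, essentially independent of the state built in phases $1,\dots,i-1$. Throughout, it keeps the current matching $M$ together with, for each edge $(x,y)\in M$, two small buckets holding up to $O(\log n \cdot \poly(\eps^{-1}))$ stream edges that join a currently-free vertex to $x$, respectively to $y$; whenever a bucket pair yields a free vertex $f$ adjacent to $x$ and a distinct free vertex $f'$ adjacent to $y$, it performs the augmentation $M \leftarrow (M \setminus\{(x,y)\}) \cup \{(f,x),(y,f')\}$, and it also adds any arriving edge whose two endpoints are both currently free. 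Buckets are reset between phases, so that a free vertex ``wasted'' on a half-built path in one phase can be reused later. The heart of the improvement over the $O(n^{1.5})$ space of Assadi et al. is the observation that, because the order is random, these tiny buckets already behave like a representative sample of the relevant edges: a $\poly(\eps^{-1})$-size sample of a free vertex's neighborhood into $V(M)$ is enough to let a greedy matching between free vertices and $M$-endpoints capture a constant fraction of the augmenting paths that are available, so one never needs to store all such edges. The total space is $O(n)$ for the matching plus $O(n\log n \cdot \poly(\eps^{-1}))$ for the buckets, the $\log n$ paying for a union bound over the $\le n$ edges of $M$.

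The crux of the analysis, and the step I expect to be the main obstacle, is a per-phase progress lemma: whenever the current matching has size $\alpha\mu < (2/3-\eps)\mu$, a phase increases $|M|$ by $\Omega(\poly(\eps)\cdot\mu)$ with probability $1-1/\poly(n)$. One fixes the $\Omega(\eps\mu)$ vertex-disjoint length-three augmenting paths given by the structural lemma and argues, using the randomness of the phase's block, that a constant fraction of them — after discounting those whose free endpoints get ``claimed'' by other half-built paths and those whose middle $M$-edge is consumed by another augmentation — are actually completed; these events are essentially independent across the fixed paths because the paths are vertex-disjoint. The delicate point is disentangling the two ways $M$ changes: a greedy edge-addition can match an endpoint of a would-be augmenting path before its wing edges are seen, and an augmentation rewrites $M$ and can invalidate partially-built paths elsewhere, so one must show that the random order prevents any ``conspiracy'' in which almost all potential augmentations are pre-empted, and that bucket overflow — which would force the algorithm to drop an edge it needed — occurs with probability only $1/\poly(n)$, since only $\poly(\eps^{-1})$ free neighbors of a fixed $M$-endpoint can arrive within a phase before that endpoint is re-matched. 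The required concentration is obtained by a bounded-differences argument over the random permutation of the stream.

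Given the progress lemma, the theorem follows by accounting. While the ratio is below $2/3-\eps$ each phase makes $\Omega(\poly(\eps)\cdot\mu)$ progress with high probability, and the total progress needed to close the initial gap is at most $\mu/6$, so taking $T = \poly(\eps^{-1})$ sufficiently large guarantees that after all $T$ phases the matching has size at least $(2/3-\eps)\mu$; a union bound over the $T$ phases and over the $\le n$ buckets keeps the overall failure probability at $1/\poly(n)$. The algorithm is deterministic because its only source of randomness is the input order, which the theorem assumes is uniformly random. Its stored state is a matching plus $O(\log n \cdot \poly(\eps^{-1}))$ edges per edge of $M$, i.e.\ $O(n\log n \cdot \poly(\eps^{-1}))$ words in total, as claimed.
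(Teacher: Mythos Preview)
Your proposal takes a fundamentally different route from the paper and has a genuine gap at precisely the point you yourself flag as ``the main obstacle.''

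The paper does not use augmenting paths at all. It builds, during the first $\eps m$ edges of the stream, a subgraph $H$ with bounded edge-degree $\beta$ (a relaxation of the edge-degree constrained subgraph of Bernstein--Stein); then in a second phase it simply stores every remaining stream edge $(u,v)$ with $\deg_H(u)+\deg_H(v)<\beta(1-\lambda)$ into a set $X$, and at the end returns the maximum matching in $H\cup X$. The $2/3$ ratio comes from a structural lemma showing that $H$ together with all its underfull edges contains a $(2/3-\eps)$-approximate matching. The random order is used only twice: a potential-function argument caps the total number of insertions and deletions to $H$ at $n\beta^2$, forcing Phase~I to end within $\eps m$ edges; and if an entire epoch of $\alpha$ random edges contains no underfull edge, then with high probability at most $\gamma=O(n\log n\cdot\poly(\eps^{-1}))$ underfull edges remain in the stream, which bounds $|X|$.

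Your scheme is the length-three augmenting-path strategy pursued by Konrad et al., Konrad, and Farhadi et al.---exactly the line of work that stalled at $6/11$ (general) and $3/5$ (bipartite). The structural count you give (a maximal matching of size $\alpha\mu<(2/3-\eps)\mu$ admits at least $(1-\tfrac{3\alpha}{2})\mu$ vertex-disjoint length-three augmenting paths) is correct and well known, but your per-phase progress lemma is asserted, not proved, and there is concrete reason to doubt it holds for the algorithm you describe. The claim that ``a $\poly(\eps^{-1})$-size sample of a free vertex's neighborhood into $V(M)$ is enough to let a greedy matching between free vertices and $M$-endpoints capture a constant fraction of the augmenting paths'' is the entire difficulty, not an observation: a free vertex can have edges to many $M$-endpoints, and an order-driven greedy commitment can spend it on the wrong one, so the mere existence of $\Omega(\eps\mu)$ disjoint paths does not imply your buckets find a constant fraction of them. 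This is precisely the bottleneck that kept earlier augmenting-path algorithms below $2/3$, and nothing in your sketch explains what new idea overcomes it. The bucket-overflow claim is also unjustified: an $M$-endpoint can have $\Theta(n)$ free neighbors, and a phase of length $m/T$ can deliver far more than $\poly(\eps^{-1})$ of them before any augmentation touches that endpoint. Finally, the matching at the start of phase $i$ is a deterministic function of the first $i{-}1$ blocks of the same random permutation, so treating block $i$ as ``essentially independent'' of the current state requires a conditioning argument you do not supply.
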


Our result significantly improves upon the space requirement of Assadi et al. \cite{AssadiBBMS19} and the approximation ratio of Farhadi et al. \cite{FarhadiHMRR20}. In fact, our algorithm achieves the best of both those results (see Table \ref{table}). On top of that, our result is quite simple; given that it improves upon a sequence of previous results, we see this simplicity as a plus. 
\paragraph*{Related Work}
If the only requirement is to return an approximate estimate of the \emph{size} of the maximum matching, rather than the actual edges, a surprising result by Kapralov et al. shows that one can get away with very little space: given a single pass over a random-order stream, it is possible to estimate the size within a $1/\polylog(n)$ factor using only $\polylog(n)$ space \cite{KapralovKS14}; a very recent improvement reduces the polylog factors to $O(\log^2(n))$ \cite{KapralovMNT20}. There is also a line of work that estimates the size of the matching in $o(n)$ space in \emph{adversarial} streams for special classes of graphs as such planar graphs or low-arboricity graphs \cite{EsfandiariHLMO18,BuryS15,McGregorV16,CormodeJMM17,McGregorV18}.

There are many one-pass streaming algorithms for computing a maximum matching in \emph{weighted} graphs. For adversarial-order streaming, a long line of work culminated in a $(1/2-\eps)$-approximation using $O(n)$ space \cite{FeigenbaumKMSZ05,McGregor05,EpsteinLMS11,CrouchS14,PazS17,GhaffariW19}. Gamlath et al. recently showed that for random-order streams, one can achieve an approximation ratio of $1/2 + \Omega(1)$. \cite{GamlathKMS19}. See also other related work on weighted graphs in \cite{BuryS15}.

There are several results on upper and lower bounds for computing a maximum matching in dynamic streams (where edges can also be deleted) \cite{Konrad15,ChitnisCHM15,AssadiKLY16,ChitnisCEHMMV16,AssadiKL17}. Finally, there are several results that are able to achieve better bounds by allowing the algorithm to make multiple passes over the stream: some results focus on just two or three passes \cite{KonradMM12,EsfandiariHM16,KaleT17,Konrad18}, while others seek to compute a $(1-\eps)$-approximate matching by allowing a large constant number (or even $\log(n)$) passes  \cite{McGregor05,AhnG13,GuruswamiO13,AhnG15}.
\paragraph*{Overview of Techniques}
The basic greedy algorithm trivially achieves a $1/2$-approximate matching in adversarial streams; in fact, Konrad et al later showed that the ratio remains $1/2 + o(1)$ even in random-order streams \cite{KonradMM12}. Existing algorithms for improving the $1/2$ ratio in random-order streams generally fall into two categories. The algorithms in \cite{KonradMM12,Konrad18,GamlathKMS19,FarhadiHMRR20} use the randomness of the stream to compute some fraction of short augmenting paths, thus going beyond the $1/2$-approximation of a maximal matching. The result in \cite{AssadiBBMS19} instead shows that one can obtain a large matching by constructing a subgraph that obeys certain degree-properties.

Our result follows the framework of \cite{AssadiBBMS19}. Given any graph $G$, an earlier result of Bernstein and Stein for fully dynamic matching defined the notion of an edge-degree constrainted subgraph (denoted EDCS), which is a sparse subgraph $H \subseteq G$ that obeys certain degree-properties \cite{BernsteinS15}. They showed that any EDCS $H$ always contains a ($2/3-\eps)$-approximate matching. The streaming result of Assadi et al. \cite{AssadiBBMS19} then showed that given a random-order stream, it is possible to compute an EDCS $H$ in $O(n^{1.5})$ space; returning the maximum matching in $H$ yields a $(2/3-\eps)$-approximate matching in $G$.

Our result also takes the EDCS as its starting point, but it is unclear how to compute an EDCS $H$ of $G$ using less than $O(n^{1.5})$ space. Our algorithm requires two new contributions. Firstly, we show that it is sufficient for $H$ to satisfy a somewhat relaxed set of properties. Our main contribution is then to use an entirely different construction of this relaxed subgraph, which uses the randomness of the stream more aggressively to compute $H$ using low space.

\section{Notation and Preliminaries}

Consider any graph $H = (V_H,E_H)$. We define $\deg_H(v)$ to be the degree of $v$ in $H$ and we define the degree of an edge $(u,v)$ to be $\deg_H(u) + \deg_H(v)$. A matching $M$ in $H$ is a set of vertex-disjoint edges. All graphs in this paper are unweighted and undirected. We use $\mu(H)$ to denote the size of the maximum matching in $H$. Unless otherwise indicated, we let $G = (V,E)$ refer to the input graph and let $n = |V|$ and $m = |E|$. We note that every graph referred to in the paper has the same vertex $V$ as the input graph; when we refer to subgraphs, we are always referring to a subset of edges on this same vertex set.

The input graph $G = (V,E)$ is given as a stream of edges $S = \langle e_1, \ldots, e_m \rangle$. We assume that the permutation $(e_1,\ldots, e_m)$ of the edges is chosen \emph{uniformly at random} among all permutations of $E$. We use $\stream{i}{j}$ to denote the substream $\langle e_i, \ldots, e_j \rangle$, and we use $G_{>i} \subseteq G$ to denote the subgraph of $G$ containing all edges in $\{e_{i+1},\ldots,e_m\}$.

Our analysis will apply concentration bounds to segments $\stream{i}{j}$ of the stream. Observe that because the stream is a random permutation, any segment $\stream{i}{j}$ is equivalent to sampling $j - 1 + 1$ edges from the stream without replacement. We can thus apply the Chernoff bound for negatively associated variables (see e.g. the primer in \cite{wajc}).

\begin{theorem}[Chernoff]
\label{thm:chernoff}
Let $X_1, \ldots X_n$ be negatively associated random variables taking values in $[0,1]$. Let $X = \sum X_i$ and let $\mu = \ex[X]$. Then, for any $0 < \delta < 1$ we have 
$$\Pr[X \leq \mu (1-\delta)] \leq \exp(\frac{-\mu \cdot \delta^2}{2}),$$ 
and
$$\Pr[X \geq \mu (1+\delta)] \leq \exp(\frac{-\mu \cdot \delta^2}{3})$$
\end{theorem}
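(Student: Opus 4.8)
The plan is to run the classical exponential-moment (Bernstein/Chernoff) argument, replacing the usual independence assumption with the defining inequality for negatively associated variables. For the upper tail, fix $t>0$ and apply Markov's inequality to $e^{tX}$:
$$\Pr[X \geq \mu(1+\delta)] = \Pr\!\left[e^{tX} \geq e^{t\mu(1+\delta)}\right] \leq e^{-t\mu(1+\delta)}\,\ex\!\left[e^{tX}\right].$$
Since $X=\sum_i X_i$ and each map $x\mapsto e^{tx}$ is nondecreasing, negative association yields $\ex[e^{tX}]=\ex\big[\prod_i e^{tX_i}\big]\leq \prod_i \ex[e^{tX_i}]$; this factorization of the moment generating function is the only place where the dependence structure enters, and it is exactly what the primer in \cite{wajc} provides (negatively associated variables satisfy $\ex[\prod_i f_i(X_i)] \le \prod_i \ex[f_i(X_i)]$ whenever the $f_i$ are all monotone in the same direction).

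Next I would bound each factor using $X_i\in[0,1]$. By convexity of the exponential on $[0,1]$ we have $e^{tX_i}\leq 1+X_i(e^{t}-1)$ pointwise, hence $\ex[e^{tX_i}]\leq 1+\ex[X_i](e^{t}-1)\leq \exp\!\big(\ex[X_i](e^{t}-1)\big)$. Multiplying over $i$ gives $\prod_i \ex[e^{tX_i}]\leq \exp(\mu(e^{t}-1))$, so $\Pr[X\geq\mu(1+\delta)]\leq \exp\!\big(\mu(e^{t}-1)-t\mu(1+\delta)\big)$. Choosing $t=\ln(1+\delta)>0$ to minimize the exponent yields the familiar bound $\big(e^{\delta}/(1+\delta)^{1+\delta}\big)^{\mu}$. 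The lower tail is symmetric: for $s>0$, $\Pr[X\leq\mu(1-\delta)]\leq e^{s\mu(1-\delta)}\ex[e^{-sX}]$; since $x\mapsto e^{-sx}$ is nonincreasing, negative association again factorizes the expectation, and the same convexity bound (valid for every real exponent) gives $\ex[e^{-sX_i}]\leq \exp\!\big(\ex[X_i](e^{-s}-1)\big)$. Optimizing at $s=-\ln(1-\delta)>0$ produces $\big(e^{-\delta}/(1-\delta)^{1-\delta}\big)^{\mu}$.

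It then remains to verify the two scalar inequalities
$$\frac{e^{\delta}}{(1+\delta)^{1+\delta}}\leq e^{-\delta^2/3} \qquad\text{and}\qquad \frac{e^{-\delta}}{(1-\delta)^{1-\delta}}\leq e^{-\delta^2/2}$$
for all $0<\delta<1$, i.e.\ $\delta-(1+\delta)\ln(1+\delta)\leq -\delta^2/3$ and $\delta+(1-\delta)\ln(1-\delta)\geq \delta^2/2$. Both are standard and follow from the Taylor expansion of $(1\pm\delta)\ln(1\pm\delta)$ (equivalently, by checking that the relevant single-variable functions vanish to second order at $\delta=0$ and have the correct sign of derivative on $(0,1)$). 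I expect this last, purely calculus-level step to be the only technically fiddly part of the argument; the structural content — factorizing the MGF via negative association — is immediate, so in fact one could simply cite the treatment in \cite{wajc}.
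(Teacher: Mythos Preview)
Your argument is correct and is exactly the standard exponential-moment proof; the paper itself does not prove Theorem~\ref{thm:chernoff} at all but simply states it as a known fact with a pointer to the primer~\cite{wajc}, so there is nothing to compare against. Your final remark that ``one could simply cite the treatment in~\cite{wajc}'' is precisely what the paper does.
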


\paragraph*{The early and late sections of the stream}

Our algorithm will use the first $\eps m$ edges of the stream to learn about the graph and will effectively ignore them for the purposes of analyzing the maximum matching. Thus, we only approximate the maximum matching in the later $(1-\eps) m$ edges of stream; because the stream is random, these edges still contain a large fraction of the maximum matching. We use the following definitions and lemmas to formalize this intuition. 

\begin{definition} 
We Let $\eearly$ denote the first $\eps m$ edges of the stream, and $\elate$ denote the rest: that is, $\eearly = \{e_1, \ldots, e_{\eps m}\}$, and  $\elate = \{e_{\eps m + 1}, \ldots, e_m \}$. Define $\gearly = (V, \eearly)$ and $\glate = (V, \elate) = G_{>\eps m}.$
\end{definition}

For the probability bounds to work out, we need to assume that $\mu(G) \geq 20\log(n)\eps^{-2}$. We justify this assumption by observing that every graph $G$ satisfies $m \leq 2n\mu(G)$, so  if $\mu(G) < 20\log(n)\eps^{-2}$, then the algorithm can trivially return an exact maximum matching by simply storing every edge using only $O(m) = O(\log(n)\eps^{-2})$ space. This justifies the following:

\begin{claim} [Assumption]
\label{assumption}
We can assume for the rest of the paper that $\mu(G) \geq 20\log(n)\eps^{-2}$. 
\end{claim}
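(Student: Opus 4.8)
The plan is to show that the bound $\mu(G) \geq 20\log(n)\eps^{-2}$ may be assumed without loss of generality, by arguing that whenever it fails, the entire input graph is small enough to fit within the space budget of Theorem~\ref{thm:main}, at which point the problem is trivial.

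First I would establish the elementary inequality $m \leq 2n\mu(G)$. Let $M$ be a maximum matching of $G$ and let $U$ be the set of $2\mu(G)$ vertices incident to edges of $M$. If some edge of $G$ had both endpoints outside $U$, then adding it to $M$ would produce a larger matching, contradicting maximality; hence $U$ is a vertex cover of $G$. Since every vertex has degree at most $n-1$ and every edge is covered by some vertex of $U$, we get $m \leq \sum_{v \in U}\deg_G(v) \leq 2\mu(G)(n-1) < 2n\mu(G)$.

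Next I would combine this with the hypothetical failure of the assumption: if $\mu(G) < 20\log(n)\eps^{-2}$, then $m < 40\,n\log(n)\eps^{-2}$, i.e.\ $m = O(n\log(n)\poly(\eps^{-1}))$. This is within the space bound promised by Theorem~\ref{thm:main}, so in this regime the streaming algorithm can simply buffer the whole stream and, once it ends, compute an \emph{exact} maximum matching offline (for instance via Edmonds' blossom algorithm). This is still a deterministic single-pass streaming algorithm, and an exact maximum matching is in particular a $(2/3-\eps)$-approximate matching, so Theorem~\ref{thm:main} holds trivially in this case. Consequently, for the remainder of the paper we may restrict attention to inputs with $\mu(G) \geq 20\log(n)\eps^{-2}$.

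I do not expect a genuine obstacle here; the only point requiring a moment's care is verifying that $m = O(n\log(n)\poly(\eps^{-1}))$ really does fit the target space bound — which it does, since the bound in Theorem~\ref{thm:main} explicitly permits a $\poly(\eps^{-1})$ factor and $\eps$ is treated as a constant — and observing that the "store everything, solve offline" strategy satisfies all the syntactic requirements (deterministic, single pass) demanded by the theorem statement.
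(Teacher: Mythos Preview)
Your proposal is correct and follows essentially the same argument as the paper: both use the inequality $m \leq 2n\mu(G)$ (via the observation that the matched vertices form a vertex cover) to conclude that if $\mu(G) < 20\log(n)\eps^{-2}$ then $m = O(n\log(n)\eps^{-2})$, so the algorithm can simply store the entire graph and return an exact matching. Your write-up is more detailed (explicitly proving the vertex-cover bound and naming an offline algorithm), but the idea is identical.
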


Combining Claim \ref{assumption} with Chernoff bound we get the following lemma, which allows us to focus our analysis on the edges in $\glate$. 

\begin{lemma} 
\label{lem:glate-matching}
Assuming that $\eps < 1/2$, we have that $\Pr[\mu(\glate) \geq (1-2\eps)\mu(G)] \geq 1 - n^{-5}$.
\end{lemma}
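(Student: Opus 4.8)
The plan is to fix a maximum matching $M^*$ of $G$ with $|M^*| = \mu(G)$, and to argue that a large fraction of the edges of $M^*$ land in $\elate$; since those edges are vertex-disjoint, they themselves form a matching contained in $\glate$, which immediately gives a lower bound on $\mu(\glate)$. The key observation is that, because the stream is a uniformly random permutation of $E$, each fixed edge of $M^*$ lands in $\eearly$ (the first $\eps m$ edges) with probability exactly $\eps$, so the expected number of edges of $M^*$ falling in $\eearly$ is $\eps \mu(G)$, and hence the expected number landing in $\elate$ is $(1-\eps)\mu(G)$.

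Next I would set up a concentration argument to show the number in $\eearly$ does not greatly exceed its mean. Let $X_e$ be the indicator that edge $e \in M^*$ lies in $\eearly$, and let $X = \sum_{e \in M^*} X_e$, so $\ex[X] = \eps\mu(G)$. The variables $\{X_e\}_{e \in M^*}$ correspond to sampling without replacement (whether each of these $\mu(G)$ distinguished edges falls among the first $\eps m$ positions), so they are negatively associated and take values in $[0,1]$; hence Theorem~\ref{thm:chernoff} applies. Apply the upper-tail bound with $\delta = 1$:
$$\Pr[X \geq 2\eps\mu(G)] = \Pr[X \geq \ex[X](1+1)] \leq \exp\!\left(\frac{-\eps\mu(G)}{3}\right).$$
By Claim~\ref{assumption} we have $\mu(G) \geq 20\log(n)\eps^{-2}$, so $\eps\mu(G)/3 \geq (20/3)\log(n)\eps^{-1} \geq (20/3)\log(n)$ (using $\eps < 1$), which makes the right-hand side at most $n^{-20/3} \leq n^{-5}$ for $n$ large enough; a slightly more careful constant chase (or simply noting $20/3 > 5$) gives the clean bound $n^{-5}$.

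Finally I would assemble the pieces. On the complement of the bad event — which occurs with probability at least $1 - n^{-5}$ — at most $2\eps\mu(G)$ edges of $M^*$ lie in $\eearly$, so at least $(1-2\eps)\mu(G)$ edges of $M^*$ lie in $\elate$. These edges are pairwise vertex-disjoint (being a subset of the matching $M^*$) and all belong to $\glate = (V, \elate)$, so $\mu(\glate) \geq (1-2\eps)\mu(G)$, as claimed; the hypothesis $\eps < 1/2$ just ensures this bound is nontrivial (positive). I do not anticipate a real obstacle here: the only mildly delicate point is confirming that the $\{X_e\}$ are negatively associated — but this is exactly the standard ``sampling without replacement'' setting flagged in the paragraph preceding Theorem~\ref{thm:chernoff}, so it may be invoked directly.
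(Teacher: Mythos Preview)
Your argument is correct and follows essentially the same approach as the paper: fix a maximum matching, use negative association of the ``which side did each matching edge land on'' indicators, and apply Theorem~\ref{thm:chernoff} together with Claim~\ref{assumption}; the only cosmetic difference is that you count matching edges falling into $\eearly$ and use the upper tail, while the paper counts those falling into $\glate$ and uses the lower tail. One small technicality: Theorem~\ref{thm:chernoff} as stated requires $0<\delta<1$, so your choice $\delta=1$ is on the boundary---either invoke the standard extension of the upper-tail bound to $\delta\ge 1$, or take any $\delta\in(1/2,1)$ (the same calculation then still gives an exponent exceeding $5\log n$).
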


\begin{proof}
Fix some maximum matching $M = (f_1, ..., f_{\mu(G)})$ of $G$. Define $X_i$ to be the indicator variable that edge $f_i \in M$ appears in $\glate$. Since the stream is random, and since $\glate$ contains exactly $(1-\eps)m$ edges, we have that $\mathbb{E}[X_i] = (1-\eps)$ and $\sum \mathbb{E}[X_i] = (1-\eps)\mu(G)$. It is also easy to see that the $X_i$ are negatively associated, since these variables correspond to sampling $(1-\eps)m$ edges without replacement. Recall from Claim \ref{assumption} that we assume $\mu(G) \geq 20\log(n)\eps^{-2}$. Applying the Chernoff Bound in Theorem \ref{thm:chernoff} completes the proof.
\end{proof}

\paragraph*{Existing Work on EDCS}

We now review the basic facts about the edge-degree constrained subgraph (EDCS), which was first introduced in \cite{BernsteinS15}. 

\begin{definition}
Let $G = (V,E)$ be a graph, and $H = (V, E_H)$ a subgraph of $G$. Given any parameters $\beta \geq 2$ and $\lambda < 1$, we say that $H$ is a $(\beta, \lambda)$-EDCS of $G$ if $H$ satisfies the following properties: 
\begin{itemize}
	\item \indent \indent [Property P1:] For any edge $(u,v) \in H$, $\deg_H(u) + \deg_H(v) \leq \beta$
	\item \indent \indent [Property P2:] For any edge $(u,v) \in G \setminus H$, $\deg_H(u) + \deg_H(v) \geq \beta(1-\lambda)$.
\end{itemize}
\end{definition}

The crucial fact about the EDCS is that it always contains a (almost) $2/3$-approximate matching. The simplest proof of Lemma \ref{lem:edcs-matching} below is in Lemma 3.2 of \cite{AssadiB19}.

\begin{lemma} [\cite{AssadiB19}]\label{lem:edcs-matching}
	Let $G(V,E)$ be any graph and $\eps < 1/2$ be some parameter. Let $\lambda, \beta$ be parameters with $\lambda \leq \frac{\eps}{64}$, $\beta \geq 8\lambda^{-2}\log{(1/\lambda)}$. Then, for any $(\beta, \lambda)-EDCS$ $H$ of $G$, we have that $\mu(H) \geq (\frac{2}{3} - \eps) \mu(G)$. (Note that the final guarantee is stated slightly differently than in Lemma 3.2 of \cite{AssadiB19}, and to ensure the two are equivalent, we set $\lambda$ to be a factor of two smaller than in Lemma 3.2 of \cite{AssadiB19}.)
\end{lemma}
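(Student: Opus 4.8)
The plan is to argue by contradiction. Suppose $\mu(H) < (2/3-\eps)\mu(G)$. Fix a maximum matching $M$ of $H$ and a maximum matching $M^*$ of $G$, and work with the symmetric difference $M\oplus M^*$, which — being the symmetric difference of two matchings — is a disjoint union of simple paths and even cycles.

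First I would isolate the relevant structure by a purely combinatorial count. Since $M^*$ is a maximum matching of $G$, $M\oplus M^*$ contains no $M^*$-augmenting path, so the number of $M$-augmenting paths in it is exactly $\mu(G)-\mu(H) > (1/3+\eps)\mu(G)$. Every $M$-augmenting path of length at least $5$ uses at least two edges of $M$, and $|M|=\mu(H)<(2/3)\mu(G)$, so fewer than $(1/3)\mu(G)$ of the augmenting paths are long; hence more than $\eps\mu(G)$ of them have length $1$ or $3$. Call this family $\mathcal P$; its paths are vertex-disjoint.

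The heart of the argument is to contradict $|\mathcal P|>\eps\mu(G)$ using Properties P1 and P2 together with the maximality of $M$. Take a length-$3$ path in $\mathcal P$ with vertices $u,x,y,v$ in order, where $(u,x),(y,v)\in M^*$ and $(x,y)\in M\subseteq H$ (so $u,v$ are $M$-unmatched): if both $(u,x)$ and $(y,v)$ lay in $H$, then $\bigl(M\setminus\{(x,y)\}\bigr)\cup\{(u,x),(y,v)\}$ would be a strictly larger matching of $H$, contradicting maximality — so at least one end-edge, say $(u,x)$, is absent from $H$, and P2 gives $\deg_H(u)+\deg_H(x)\ge\beta(1-\lambda)$ while P1 on $(x,y)$ gives $\deg_H(x)+\deg_H(y)\le\beta$; likewise a length-$1$ path $(u,v)\in M^*$ must have $(u,v)\notin H$, so $\deg_H(u)+\deg_H(v)\ge\beta(1-\lambda)$. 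In every case a pair of path-vertices, at least one of them $M$-unmatched, is forced to carry $H$-degree summing to $\approx\beta$. I would then push these $|\mathcal P|>\eps\mu(G)$ disjoint constraints through a discharging/double-counting argument: by P1 every edge of $H$ has degree $\le\beta$, so any vertex whose $H$-degree is within $O(\lambda\beta)$ of $\beta$ has \emph{all} of its $H$-neighbours of degree $O(\lambda\beta)$ and these ``heavy'' vertices form an almost-independent set; the $\approx\beta\cdot|\mathcal P|$ edge-endpoints they force are, via the maximality of $M$, routed onto the at most $2\mu(H)<(4/3)\mu(G)$ vertices matched by $M$, which — after one accounts for those neighbours' $O(\lambda\beta)$ degrees — can absorb only $O(\lambda\mu(G))$ endpoints, forcing $|\mathcal P|=O(\lambda\mu(G))<\eps\mu(G)$ once $\lambda\le\eps/64$: contradiction. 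The stated bound $(2/3-\eps)$ then follows by bookkeeping the constants (this is exactly Lemma 3.2 of \cite{AssadiB19} after the indicated factor-two rescaling of $\lambda$).

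The step I expect to be the main obstacle is that last discharging step: a naive capacity count is loose by a factor $\approx 1/\lambda$, so one must take care to always charge the forced heavy degree onto genuinely low-degree vertices — using P1 in the sharp form $\deg_H(w)+\deg_H(w')\le\beta$ on each heavy edge — and to dispose of the intermediate-degree vertices that P2 alone does not immediately place on the heavy side (for instance by following a length-$3$ path one more step: a middle vertex forced low-degree forces the far endpoint to be heavy via P2). It is precisely this accounting that pins down the quantitative hypotheses $\lambda\le\eps/64$ and $\beta\ge 8\lambda^{-2}\log(1/\lambda)$; only a modest polynomial lower bound on $\beta$ in terms of $1/\lambda$ is actually needed for this lemma, the $\log(1/\lambda)$ factor being inherited from the regime in which an EDCS is guaranteed to exist.
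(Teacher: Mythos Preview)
The paper does not prove this lemma at all: it is quoted verbatim as a black box from \cite{AssadiB19} (``The simplest proof of Lemma~\ref{lem:edcs-matching} below is in Lemma~3.2 of \cite{AssadiB19}''). So there is no ``paper's own proof'' to compare against; the relevant comparison is with the argument in \cite{AssadiB19}. That argument is \emph{not} an augmenting-path/discharging argument. For bipartite graphs it goes through K\"onig/Hall duality: one takes a minimum vertex cover (equivalently, a Hall witness) of $H$, uses Property~P1 to upper-bound the total $H$-degree on the cover side and Property~P2 to lower-bound the total $H$-degree contributed by the edges of a maximum matching $M^*$ of $G$, and compares the two sums to get $\mu(H)\ge(2/3-\eps)\mu(G)$ directly. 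The non-bipartite case is then reduced to the bipartite one by a random bipartition (the Lov\'asz-style trick), which is where the $\log(1/\lambda)$ in the lower bound on $\beta$ actually enters.

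Your approach is genuinely different, and the discharging step you flag as ``the main obstacle'' is, as written, a real gap rather than just missing bookkeeping. The difficulty is exactly the one you name but do not resolve: from a length-$1$ path you only get $\deg_H(u)+\deg_H(v)\ge\beta(1-\lambda)$, so the ``heavy'' endpoint may have degree only about $\beta/2$, and then P1 gives its $H$-neighbours degree up to about $\beta/2$ as well --- not $O(\lambda\beta)$. Your capacity count therefore compares roughly $\beta\cdot|\mathcal P|$ against $\beta\cdot 2\mu(H)$ and yields only $|\mathcal P|\lesssim 2\mu(H)$, which is vacuous. The proposed fix (``follow a length-$3$ path one more step'') does not help for length-$1$ paths, and even for length-$3$ paths it only works when \emph{both} end-edges are absent from $H$; if $(u,x)\notin H$ but $(y,v)\in H$, chasing to $v$ gives no P2 constraint on $v$. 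In short, the augmenting-path decomposition does not by itself isolate genuinely high-degree ($\approx\beta$) unmatched vertices, which is what your charging scheme needs. The vertex-cover formulation in \cite{AssadiB19} sidesteps this entirely because it sums degrees over \emph{all} matching edges of $G$ at once and balances that against the cover, rather than trying to localise the contradiction to individual short paths.
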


\section{Our Modified Subgraph}
Unlike the algorithm of \cite{AssadiBBMS19}, we do not actually construct an EDCS of $G$, as we do not know how to do this in less than $O(n^{1.5})$ space. We instead rely on a more relaxed set of properties, which we analyze using Lemma \ref{lem:edcs-matching} as a black-box. We now introduce some of the basic new tools used by our algorithm. Note that graph $G$ in the lemma and definitions below crucially refers to any arbitrary graph $G$, and not necessarily the main input graph of the streaming algorithm. 

\begin{definition}
We say that a graph $H$ has bounded edge-degree $\beta$ if for every edge $(u,v) \in H$, $\deg_H(u) + \deg_H(v) \leq \beta$. 
\end{definition}

\begin{definition} 
Let $G$ be any graph, and let $H$ be a subgraph of $G$ with bounded edge-degree $\beta$. For any parameter $\lambda < 1$, we say that an edge $(u,v) \in G \setminus H$ is $(G,H,\beta,\lambda)$-underfull if $\deg_H(u) + \deg_H(v) < \beta(1-\lambda)$
\end{definition}

The two definitions above effectively separate the two EDCS properties: any subgraph $H$ of $G$ with bounded edge-degree $\beta$ automatically satisfies property P1 of an EDCS, and underfull edges are then those that violate property P2. We now show that one can always construct a large matching from the combination of these two parts. 

\begin{lemma}
\label{lem:underfull-matching}
Let $\eps < 1/2$ be any parameter, and let $\lambda, \beta$ be parameters with
$\lambda \leq \frac{\eps}{128}$, $\beta \geq 16\lambda^{-2}\log{(1/\lambda)}$.
Consider any graph $G$, and any subgraph $H$ with bounded edge-degree $\beta$. 
Let X contain all edges in $G \setminus H$ that are $(G,H,\beta,\lambda)$-underfull. Then $\mu(X \cup H) \geq (2/3 - \eps) \mu(G)$
\end{lemma}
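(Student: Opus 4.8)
The plan is to avoid constructing an EDCS of $G$ directly, and instead to exhibit a \emph{sparse} subgraph $W \subseteq X \cup H$ that is a genuine $(\beta', \lambda')$-EDCS of some auxiliary graph $G^*$ with $\mu(G^*) = \mu(G)$, and then quote Lemma~\ref{lem:edcs-matching} as a black box. The motivation is this: $H$ by itself is essentially a $(\beta,\lambda)$-EDCS of $G \setminus X$ (Property P1 is precisely ``bounded edge-degree $\beta$'', and the edges of $G \setminus H$ that violate Property P2 are, by definition, exactly the underfull ones, all of which lie in $X$). However $\mu(G \setminus X)$ can be far smaller than $\mu(G)$, so this observation on its own yields only a weak bound; the remedy is to add back to both sides of the EDCS relation exactly those underfull edges that a fixed optimal matching uses.

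Concretely, I would first fix a maximum matching $M^*$ of $G$ and set
\[
  G^* \;:=\; (G \setminus X)\,\cup\,(M^* \cap X), \qquad W \;:=\; H \,\cup\, (M^* \cap X).
\]
Since $M^* = (M^*\setminus X)\cup(M^*\cap X) \subseteq (G\setminus X)\cup(M^*\cap X) = G^*$ while $G^* \subseteq G$, we have $\mu(G^*) = \mu(G)$; and trivially $W \subseteq X\cup H$, so it suffices to prove $\mu(W) \geq (2/3-\eps)\mu(G^*)$, which I will get from Lemma~\ref{lem:edcs-matching} once I check that $W$ is a $(\beta+2,\,2\lambda)$-EDCS of $G^*$.

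The verification splits into the two EDCS properties. For Property P1: $M^*\cap X$ is itself a matching (being a subset of $M^*$), so passing from $H$ to $W$ raises every vertex degree by at most $1$, hence every edge of $W$ has $W$-edge-degree at most $\beta+2$; moreover an edge of $M^*\cap X$ had $H$-edge-degree below $\beta(1-\lambda)$, so its $W$-edge-degree is still below $\beta$, using $\beta\lambda \geq 2$, which follows from $\beta \geq 16\lambda^{-2}\log(1/\lambda)$. For Property P2: because every edge of $H$ and every edge of $M^*\cap X$ lies in $W$, the set $G^*\setminus W$ equals exactly $(G\setminus X)\setminus H$, i.e. the \emph{non}-underfull edges of $G\setminus H$; such an edge $(u,v)$ satisfies $\deg_W(u)+\deg_W(v) \geq \deg_H(u)+\deg_H(v) \geq \beta(1-\lambda)$, and a one-line computation shows $\beta(1-\lambda) \geq (\beta+2)(1-2\lambda)$ whenever $\beta\lambda \geq 2$. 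Finally I would apply Lemma~\ref{lem:edcs-matching} to $G^*$ and $W$ with $\beta' = \beta+2$ and $\lambda' = 2\lambda$, noting that $\lambda' = 2\lambda \leq \eps/64$ follows from $\lambda \leq \eps/128$, and $\beta' \geq 8(\lambda')^{-2}\log(1/\lambda') = 2\lambda^{-2}\log\!\bigl(1/(2\lambda)\bigr)$ follows (with room to spare) from $\beta \geq 16\lambda^{-2}\log(1/\lambda)$; this gives $\mu(X\cup H) \geq \mu(W) \geq (2/3-\eps)\mu(G^*) = (2/3-\eps)\mu(G)$.

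The genuinely important step — and the point where a first attempt is likely to fail — is recognizing that one must not try to massage $X\cup H$ (or $H$) into an EDCS of $G$: the underfull edges are by definition the edges that violate P2 relative to $H$, so they cannot sit on the ``$G$ side'' of an EDCS relation whose subgraph is $H$. Keeping $H$ (plus $M^*\cap X$) on the subgraph side and correspondingly \emph{shrinking} the ambient graph from $G$ to $G^*$ — which is harmless since it preserves $\mu$ — is what makes everything line up. The rest is routine parameter bookkeeping; in particular the factor-two slack deliberately built into the hypotheses ($\lambda \leq \eps/128$ and $\beta \geq 16\lambda^{-2}\log(1/\lambda)$, versus $\eps/64$ and $8\lambda^{-2}\log(1/\lambda)$ in Lemma~\ref{lem:edcs-matching}) is exactly what absorbs the perturbations $\beta \mapsto \beta+2$ and $\lambda \mapsto 2\lambda$.
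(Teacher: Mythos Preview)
Your proof is correct and follows essentially the same approach as the paper: both fix a maximum matching $M^*$, take $W = H \cup (M^* \cap X)$ as the candidate subgraph, verify that $W$ is a $(\beta+2,2\lambda)$-EDCS of an auxiliary graph with matching number $\mu(G)$, and invoke Lemma~\ref{lem:edcs-matching}. The only cosmetic difference is the choice of auxiliary graph---the paper uses the smaller $H \cup (M^* \setminus H)$ whereas you use $(G\setminus X)\cup(M^*\cap X)$---but since $G^*\setminus W$ in your version still consists only of non-underfull edges of $G\setminus H$, the verification of Property~P2 is identical.
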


\begin{proof}
	\newcommand{\mgh}{M^H_G}
	\newcommand{\mgminus}{M^{G \setminus H}_G}
Note that it is NOT necessarily the case that $H \cup X$ is an EDCS of $G$, because adding the edges of $X$ to $H$ will increase vertex and edge degrees in $H$, so $H \cup X$ might not satisfy property P1 of an EDCS. We thus need a more careful argument. 

Let $M_G$ be the maximum matching in $G$, let $\mgh = M_G \cap H$ and let $\mgminus = M_G \cap (G \setminus H)$. Let $X^M = X \cap \mgminus$. Note that by construction, $M_G \subseteq H \cup \mgminus$, so $\mu(H \cup \mgminus)= \mu(G)$.

We now complete the proof by showing that $H \cup X^M$ is a $(\beta + 2, 2\lambda)$-EDCS of $H \cup \mgminus$. Let us start by showing property P2. Recall that $X$ contains all edges $(u,v)$ in $G \setminus H$ for which $\deg_H(u) + \deg_H(v) < \beta(1-\lambda)$, so by construction $X^M$ contains all such edges in $\mgminus$. Thus, every edge $(u,v) \in (H \cup \mgminus) \setminus (H \cup X^M) = \mgminus \setminus X^M$ must have $\deg_H(u) + \deg_H(v) \geq \beta (1-\lambda) \geq (\beta + 2)(1-2\lambda)$, where the last inequality is just rearranging the algebra to fit Property P2 for our new EDCS parameters of $\beta + 2, 2\lambda$.

For property P1, note that $X^M \subseteq \mgminus$ is a matching, so for every vertex $v$ we have $\deg_H(v) \leq \deg_{H \cup X^M}(v) \leq \deg_H(v) + 1$. Now, for $(u,v) \in H$ we had $\deg_H(u) + \deg_H(v) \leq \beta$ (by property P1 of $H$), and for $(u,v) \in X^M \subseteq X$ we had $\deg_H(u) + \deg_H(v) < \beta$ (by definition of $X$). Thus, for every $(u,v) \in H \cup X^M$ we have that $\deg_{H \cup X^M}(u) + \deg_{H \cup X^M}(v) \leq \deg_H(u) + \deg_H(v) + 2 \leq \beta + 2$.

Note that because of how we set the parameters, $\beta' = \beta + 2 < 2\beta$ and $\lambda' = 2\lambda$ satisfy the requirements of Lemma \ref{lem:edcs-matching}. We thus have that $\mu(H \cup X) \geq \mu(H \cup X^M) \geq (2/3 - \eps)\mu(H \cup \mgminus) = (2/3-\eps)\mu(G)$.

\end{proof}

\section{The Algorithm}

\subsection{The Two Phases} Our algorithm will proceed in two phases. Once phase I terminates, the algorithm proceeds to phase II and never returns to phase I. The goal of phase I is to construct a suitable subgraph $H$ of $G$. We now state the formal properties that will be guaranteed by phase I.

\begin{definition}[parameters] \label{dfn:parameters} Throughout this section we use the following parameters. Let $\eps < 1/2$ be the final approximation parameter we are aiming for. Set $\lambda = \frac{\eps}{128}$ and set $\beta = 16\lambda^{-2}\log{(1/\lambda)}$; note that $\lambda$ and $\beta$ are $O(\textrm{poly}(1/\eps))$. Set $\alpha = \frac{\eps m}{n \beta^2 + 1} = O(\frac{m}{n}\textrm{poly}(1/\eps))$ and $\gamma = 5\log(n) \frac{m}{\alpha} = O(n\log(n)\textrm{poly}(1/\eps))$.
\end{definition}

\begin{lemma} \label{lem:phase1}
Phase I uses $O(n\beta) = O(n\textrm{poly}(1/\eps))$ space and constructs a subgraph $H$ of $G$. The phase satisfies the following properties:
\begin{enumerate}
\item \label{phase1-prop1} Phase I terminates within the first $\eps m$ edges of the stream. That is, Phase I terminates at the end of processing some edge $e_i$ with $i \leq \eps m$.
\item \label{phase1-prop2} When Phase I terminates at the end of processing some edge $e_i$, the subgraph $H \subseteq G$ constructed during this phase satisfies the following properties:
\begin{enumerate}
	\item \label{phase1-prop2a} $H$ has bounded edge-degree $\beta$. As a corollary, $H$ has $O(n\beta)$ edges.
	\item \label{phase1-prop2b}  With probability at least $1 - n^{-3}$, the total number of $(G_{>i},H,\beta,\lambda)$-underfull edges in $G_{>i} \setminus H$ is at most $\gamma$. (Recall that $G_{> i}$ denotes the subgraph of $G$ that contains all edges in $\{e_{i+1}, \ldots, e_m\}$.)
\end{enumerate}
\end{enumerate}
\end{lemma}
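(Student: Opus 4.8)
Here is how I would prove Lemma~\ref{lem:phase1}.

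\smallskip
\noindent\textit{The algorithm.}\ I would implement Phase~I as a streaming version of the EDCS construction that never re-reads an edge. Process the stream in consecutive \emph{blocks} of exactly $\alpha$ edges and maintain a subgraph $H$ of $G$, initially empty, that always has bounded edge-degree $\beta$. When the current edge $(u,v)$ satisfies $(u,v)\notin H$ and $\deg_H(u)+\deg_H(v)<\beta(1-\lambda)$ (exactly the condition for $(u,v)$ to be underfull), \emph{add} it to $H$; then, to restore the edge-degree bound, remove one edge incident to $u$ and one incident to $v$ whenever such an edge now has edge-degree exceeding $\beta$. Otherwise do nothing. Call an edge-addition an \emph{update}; Phase~I ends at the close of the first block containing no update, and otherwise moves on to the next block. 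Since an addition raises $\deg_H(u)$ and $\deg_H(v)$ by only $1$, one checks that removing a single edge at each of $u,v$ suffices to reinstate bounded edge-degree $\beta$, and that a removal (which only lowers degrees) never creates a new violation; this invariant is Property~\ref{phase1-prop2a}, and it forces $|E_H|\le n\beta/2$, so storing $H$ with a degree array and a block counter uses $O(n\beta)$ space.

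\smallskip
\noindent\textit{Termination within $\eps m$ edges (Property~\ref{phase1-prop1}).}\ I would bound the number of updates by a potential argument. Take $\Phi(H)=\sum_{v}\bigl((\beta-1)\deg_H(v)-\deg_H(v)^2\bigr)$, which is $0$ on the empty graph and never exceeds $n(\beta-1)^2/4=O(n\beta^2)$. A short calculation shows that adding an underfull edge $(u,v)$ changes $\Phi$ by $2\beta-4-2(\deg_H(u)+\deg_H(v))>2\beta\lambda-4\ge 2$ (using $\beta\lambda\ge 3$, which holds for our parameters), and that removing an edge of edge-degree $\ge\beta+1$ changes $\Phi$ by $2(\deg_H(u)+\deg_H(w))-2\beta\ge 2$; hence every step strictly increases $\Phi$. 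So the total number of updates over the whole stream is less than $\Phi_{\max}/2<n\beta^2/8$, at most that many blocks contain an update, and Phase~I therefore terminates no later than the end of block $n\beta^2/8+1\le n\beta^2+1$, i.e.\ within the first $(n\beta^2+1)\alpha=\eps m$ edges. This bound is deterministic.

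\smallskip
\noindent\textit{Few underfull edges remain (Property~\ref{phase1-prop2b}).}\ Suppose Phase~I stops at the end of block $B_k$, so $i=k\alpha$ and the final subgraph is $H'$, the subgraph as it stood just before $B_k$ (unchanged during $B_k$, as that block had no update). Since $B_k$ had no update, none of its $\alpha$ edges lies in the set $U(H'):=\{(x,y)\in E:\ (x,y)\notin H',\ \deg_{H'}(x)+\deg_{H'}(y)<\beta(1-\lambda)\}$, a fixed set once $H'$ is fixed. The $(G_{>i},H',\beta,\lambda)$-underfull edges are precisely $U(H')\cap\{e_{i+1},\dots,e_m\}$, hence a subset of $U(H')\cap R$, where $R$ is the set of all edges appearing after block $B_{k-1}$. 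Now condition on the \emph{order of the first $(k-1)\alpha$ edges}: this determines whether block $B_k$ is reached, determines $H'$, and determines $R$; conditioned on it, the edges of $R$ appear in uniformly random order and $B_k$ is a uniformly random $\alpha$-subset of $R$. If $|U(H')\cap R|\ge\gamma$, the chance that $B_k$ avoids $U(H')$ entirely is at most $(1-\gamma/|R|)^{\alpha}\le(1-\gamma/m)^{\alpha}\le e^{-\gamma\alpha/m}=e^{-5\log n}\le n^{-5}$; and if $|U(H')\cap R|<\gamma$ then $G_{>i}$ already has fewer than $\gamma$ underfull edges. Either way, conditioned on any realization of the first $(k-1)\alpha$ edges, the probability that Phase~I stops at $B_k$ with $\ge\gamma$ underfull edges still present in $G_{>i}$ is at most $n^{-5}$; averaging removes the conditioning, and a union bound over the at most $n\beta^2+O(1)$ possible terminating blocks (a $\poly(n)$ factor) yields total failure probability at most $n^{-3}$.

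\smallskip
\noindent\textit{Main obstacle.}\ The delicate point is the last step: the ``final subgraph'' $H'$ ranges over exponentially many graphs, so one cannot union-bound over $H'$. The resolution is to condition only on the \emph{prefix} $e_1,\dots,e_{(k-1)\alpha}$ -- which already pins down both $H'$ and $R$ -- and to exploit that the $n^{-5}$ bound holds \emph{uniformly} over all prefixes, so that $\sum_{H'}\Pr[\text{reach }B_k \text{ with } H_{k-1}=H']\cdot n^{-5}=n^{-5}\Pr[\text{reach }B_k]$ collapses without enumerating subgraphs. One also needs the easy but essential equivalence ``no update in $B_k$'' $\Leftrightarrow$ ``$B_k\cap U(H')=\emptyset$'', which uses that $H$ is constant throughout an update-free block.
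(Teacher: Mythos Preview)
Your proposal is correct and follows essentially the same approach as the paper: the same epoch/block structure, the same potential argument for termination (your potential $\sum_v[(\beta-1)\deg_H(v)-\deg_H(v)^2]$ is the paper's $\Phi_1-\Phi_2$ up to the constant $\beta-1$ versus $\beta-1/2$, once one notes $\sum_{(u,v)\in H}(\deg_H(u)+\deg_H(v))=\sum_v\deg_H(v)^2$), and the same ``no underfull edge seen in a random block $\Rightarrow$ few underfull edges remain'' argument for Property~\ref{phase1-prop2b}. The only cosmetic differences are that the paper restores bounded edge-degree via a generic while-loop rather than your ``remove at most one edge at each endpoint'' rule, and that your conditioning on the full prefix is slightly cleaner than the paper's event-based phrasing; neither changes the substance.
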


\noindent We now show that if we can ensure the properties of Lemma \ref{lem:phase1}, our main result follows. 

\begin{proof} [Proof of Theorem \ref{thm:main}]
Let us say that Phase I terminates after edge $e_i$ and let $H$ be the subgraph constructed by Phase I. Phase II of the algorithm proceeds as follows. It initializes an empty set $X$. Then, for every edge $(u,v)$ in $\stream{i+1}{m}$, if $\deg_H(u) + \deg_H(v) < \beta(1-\lambda)$ (that is, if $(u,v)$ is $(G_{>i},H,\beta,\lambda)$-underfull), the algorithm adds edge $(u,v)$ to $X$. After the algorithm completes the stream, it then returns the maximum matching in $H \cup X$.

Let us now analyze the approximation ratio. By property \ref{phase1-prop1} of Lemma \ref{lem:phase1}, $G_{>i} \subseteq \glate$; thus, $X$ contains all $(\glate, H, \beta, \lambda)$-underfull edges. By property \ref{phase1-prop2a}, $H$ has bounded edge-degree $\beta$. Thus, applying Lemma \ref{lem:underfull-matching}, we have that $\mu(H) \geq (2/3 - \eps)\mu(\glate)$. Combining this with Lemma \ref{lem:glate-matching}, we get that $\mu(H) \geq (2/3-\eps)(1-2\eps)\mu(G) \geq (2/3 - 3\eps) \mu(G)$; using $\eps' = \eps/3$ thus yields the desired approximation ratio.

For the space analysis, we know from Lemma \ref{lem:phase1} that Phase I requires $O(n\beta)$ space, which is the space needed to store subgraph $H$. By Property \ref{phase1-prop2b}, the size of $X$ in Phase II is at most $O(n\log(n))$. The overall space is thus $O(n\log(n) + n\beta) = O(n\log(n) + n\textrm{poly}(1/\eps))$. 

Finally, note that the only two probabilistic claims are Lemma \ref{lem:glate-matching} and Property \ref{phase1-prop2b} of Lemma \ref{lem:phase1}, both of which hold with probability $\geq 1 - n^{-3}$. A union bound thus yields an overall probability of success $\geq 1 - 2n^{-3}$.
\end{proof}

\subsection{Decription of Phase I}
All we have left is to describe Phase I and prove Lemma \ref{lem:phase1}. See Algorithm \ref{alg:main} for pseudocode of the entire algorithm. Recall the parameters $\eps, \beta, \lambda, \alpha, \gamma$ from Definition \ref{dfn:parameters}. Phase I is split into epochs, each containing exactly $\alpha$ edges from the stream. So in epoch i, the algorithm looks at $\stream{(i-1)\alpha + 1}{i\alpha}$. 

%The intuition behind the algorithm is quite straight-forward. The algorithm starts with $H = \emptyset$, and goes through the edges one at a time. The algorithm will always maintain the invariant that $H$ has bounded edge-degree $\beta$. Since the goal is to reduce the number of underfull edges, whenever the algorithm detects an underfull edge it adds the edge to $H$. To ensure that $H$ always has bounded edge-degree $\beta$, after every insertion to $H$ we remove all edges in $H$ with edge degree above $\beta$. 

Phase I initializes the graph $H = \emptyset$. In epoch i, the algorithm goes through the edges of $\stream{(i-1)\alpha + 1}{i\alpha}$ one by one. For edge $(u,v)$, if $\deg_H(u) + \deg_H(v) < (1-\lambda)\beta$, then the algorithm adds edge $(u,v)$ to $H$ (Line 5). (Note that the algorithm changes $H$ over time, so $\deg_H(u) + \deg_H(v)$ always refers to the degrees in $H$ at the time edge $(u,v)$ is being examined.) After each edge insertion to $H$, the algorithm runs procedure {\em RemoveOverfullEdges($H$)} (Line 7); this procedure repeatedly picks an edge $(x,y)$ with $\deg_H(x) + \deg_H(y) > \beta$ until no such edge remains. Note that as a result, our algorithm preserves the invariant that $H$ always has bounded edge-degree $\beta$.

In each epoch, the algorithm also has a single boolean \foundu, which is set to \true\ if the algorithm ever adds an edge to $H$ during that epoch. At the end of the epoch, if \foundu\ is set to \true, then the algorithm simply proceeds to the next epoch. If \foundu\ is \false, then the algorithm permanently terminates Phase I and proceeds to Phase II. (The intuition is that since the ordering of the stream is random, if the algorithm failed to find an underfull edge in an entire epoch, then there must be relatively few underfull edges left in the stream, so Property \ref{phase1-prop2b} of Lemma \ref{lem:phase1} will be satisfied.)

Note that \foundu\ being false is the only way Phase I can terminate (Line 9); we prove in the analysis that this deterministically occurs within the first $\eps m$ edges of the stream.

\begin{algorithm2e}
		
	\caption{The algorithm for computing a matching in a random-order stream. After initialization, the algorithm goes to Phase I. Once the algorithm exits Phase I, it moves on to Phase II and never returns to Phase I. Line 9 is the only place where the algorithm can exit Phase I.}
	\label{alg:main}

	\SetAlgoSkip{}
	\SetKwProg{procedure}{Procedure}{}{}
	\SetKwBlock{DoUntilTermination}{Do Until Termination}

\procedure{Initilization}{
			Initialize $H = \emptyset$ \tcc*[f]{$H$ is a global variable modified by Phase I} 
			
			Let $\eps < 1/2$ be the main approximation parameter 
			
			Set $\lambda = \frac{\eps}{128}, \ \beta = 16\lambda^{-2}\log{(1/\lambda)}, \ \alpha = \frac{\eps m}{n \beta^2+1}, \ \gamma = 5\log(n) \frac{m}{\alpha}$ (Definition \ref{dfn:parameters}). 
			
			{\bf Go To} Phase I 
}	%end initiliazation

\procedure{Phase I}{
	
		\DoUntilTermination(\tcc*[f]{each iteration corresponds to one epoch}) {	
		(1) \foundu $\ \leftarrow \ $ FALSE 
		
		(2) \For(\tcc*[f]{each epoch looks at exactly $\alpha$ edges.}){$\alpha$ Iterations:}  {
			
		(3) Let $(u,v)$ be the next edge in the stream 
		
		(4) \If{$\deg_H(u) + \deg_H(v) < \beta(1-\lambda)$} {
			
		(5) Add edge $(u,v)$ to $H$  \tcc*[f]{note: this increases $\deg_H(u)$ and $\deg_H(v)$.} 
		
		(6) \foundu $\ \leftarrow \ $ TRUE 
		
		(7) {\em RemoveOverfullEdges($H$)} 
	} %end if
} %end for
	
		(8) \If{\foundu\ = FALSE} {
			
		(9) {\bf Go To} Phase II \tcc*[f]{permanently exit Phase I.} \;
		
	} %end if

\tcc{Else, will move on to the next epoch of Phase I.}
		
} %end do until
} %end procedure

\procedure{RemoveOverfullEdges($H$)} {
		(1) \While{there exists $(u,v) \in H$ such that $\deg_H(u) + \deg_H(v) > \beta$}{
			
		(2) Remove $(u,v)$ from $H$ $\qquad$ \tcc*[f]{note: this decreases $\deg_H(u)$ and $\deg_H(v)$} 
	} %end while
		
		\tcc*[f]{note: when the while loop terminates, $H$ is guaranteed to have bounded edge-degree $\beta$.}
} % end procedure

\procedure{Phase II} {
		(1) Initialize $X \leftarrow \emptyset$ $\qquad$ \tcc*[f]{all underfull edges will be added to $X$} \;
		(2) \ForEach{remaining edge $(u,v)$ in the stream} {
		(3) \If{$\deg_H(u) + \deg_H(v) < \beta(1-\lambda)$} {
		(4) Add edge $(u,v)$ to $X$ $\quad$  \tcc*[f]{note: this does  \emph{NOT} change any $\deg_H(v)$.} 
	} %end if
} %end for each
	()	(5) {\bf Return} the maximum matching in $H \cup X$ \;
} %end procedure
	
\end{algorithm2e}

\subsection{Analysis}
We now turn to proving Lemma \ref{lem:phase1}. The hardest part is proving Property \ref{phase1-prop1}. Observe that every epoch that doesn't terminate Phase I must add at least one edge to $H$. To prove Property \ref{phase1-prop1}, we use an auxiliary lemma that bounds the total number of changes made to $H$. 

\begin{lemma}
\label{lem:edcs-changes}
Fix any parameter $\beta > 2$. Let $H = (V_H, E_H)$ be a graph, with $E_H$ initially empty. Say that an adersary adds and removes edges from $H$ using an arbitrary sequence of two possible moves
\begin{itemize}
	\item \indent \indent [Deletion Move] Remove an edge $(u,v)$ from $H$ for which $\deg_H(u) + \deg_H(v) > \beta$
	\item \indent \indent [Insertion Move] Add an edge $(u,v)$ to $H$ for some pair $u,v \in V$ for which $\deg_H(u) + \deg_H(v) < \beta - 1$.  
\end{itemize}
Then, after $n\beta^2$ moves, no legal move remains.
\end{lemma}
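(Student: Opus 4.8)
The natural approach is a potential-function argument. The idea is to assign to each vertex $v$ a weight that grows with $\deg_H(v)$ in such a way that every insertion move increases the total potential by at least a fixed amount, while every deletion move does not decrease it. Concretely, I would try the potential $\Phi(H) = \sum_{v} f(\deg_H(v))$ for a convex function $f$, together with a secondary term accounting for the edges themselves. A clean choice that I expect to work is to track $\Phi_1(H) = \sum_{(u,v) \in H} (\deg_H(u) + \deg_H(v))$ and $\Phi_2(H) = \sum_v \deg_H(v) = 2|E_H|$, and consider a combination like $\beta \Phi_2 - \Phi_1$ (this is, up to constants, the potential used in the original EDCS papers). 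The point of using $\beta\Phi_2 - \Phi_1$ is that it is always nonnegative (since every edge has degree at most roughly $\beta$ after... actually edges can temporarily exceed $\beta$, so one must be slightly careful, but the excess is bounded), it is bounded above by $O(n\beta^2)$ (since $\Phi_2 \le O(n\beta)$ as degrees stay $O(\beta)$), and — the key point — each legal move changes it by at least a constant.

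The key steps, in order: (1) Bound the maximum degree: argue that at all times $\deg_H(v) \le \beta$ for every $v$, or at worst $\beta + O(1)$ right after an insertion, since an insertion move requires $\deg_H(u) + \deg_H(v) < \beta - 1$, so each endpoint had degree $< \beta - 1$ before, hence $\le \beta - 1$ after. Deletions only decrease degrees. So in fact $\deg_H(v) \le \beta - 1$ always, giving $\Phi_2 \le n(\beta-1)$ and $\Phi_1 \le n(\beta-1)\beta \le n\beta^2$ at all times. (2) Compute the change in the potential $\Psi := \beta\Phi_2 - \Phi_1$ under an insertion move adding $(u,v)$ with $d := \deg_H(u) + \deg_H(v) < \beta - 1$ beforehand: $\Phi_2$ increases by $2$, so $\beta\Phi_2$ increases by $2\beta$; $\Phi_1$ increases by $(d + 2)$ for the new edge plus $1$ for each of the $d$ edges already incident to $u$ or $v$, i.e. by $2d + 2$. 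Net change $\ge 2\beta - (2d+2) > 2\beta - 2(\beta-1) = 2 \geq 1$. (3) Compute the change under a deletion move removing $(x,y)$ with $d' := \deg_H(x) + \deg_H(y) > \beta$ beforehand: $\Phi_2$ decreases by $2$, so $\beta\Phi_2$ decreases by $2\beta$; $\Phi_1$ decreases by $d'$ for the removed edge plus $1$ for each of the $d' - 2$ other edges incident to $x$ or $y$, i.e. by $2d' - 2$. Net change $= -2\beta + 2d' - 2 > -2\beta + 2\beta - 2 = -2$. Hmm — this gives the wrong sign; a deletion can decrease $\Psi$. So $\beta\Phi_2 - \Phi_1$ alone is not monotone. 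I would instead flip to the right combination: track $\Psi' := \Phi_1 - (\beta-1)\Phi_2$ or directly use the classical EDCS potential $\sum_{(u,v)\in H}(\deg_H(u)+\deg_H(v)) - (\beta - \tfrac12)\sum_v \deg_H(v)$, and recompute — a deletion of an overfull edge should increase it and an insertion should also increase it, by the computations above with signs arranged correctly. The arithmetic is routine once the coefficient is tuned so that the "gap" between the insertion threshold $\beta - 1$ and the deletion threshold $\beta$ makes both moves strictly increase $\Psi'$.

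The main obstacle — and the only real content — is step (3): getting a single potential that simultaneously increases under both move types, exploiting the $1$-unit gap between the two thresholds ($< \beta - 1$ for insertions, $> \beta$ for deletions). Once such a $\Psi$ is found, I bound $0 \le \Psi \le C n\beta^2$ for an explicit constant $C$ using the degree bound from step (1), observe each move increases $\Psi$ by at least a constant $c$, and conclude that at most $Cn\beta^2/c$ moves are possible; tuning constants gives the claimed bound of $n\beta^2$ moves. I expect the classical EDCS potential $\Phi(H) = \sum_{(u,v)\in H}\big(\beta - \tfrac12 - \tfrac12(\deg_H(u)+\deg_H(v))\big)$, or a close variant, to do the job, since Lemma~\ref{lem:edcs-changes} is essentially the "an EDCS can be built by local moves" statement underlying all prior EDCS constructions; the novelty here is only that the thresholds are shifted by the additive constant needed for the relaxed insertion rule $\deg_H(u)+\deg_H(v) < \beta - 1$ rather than $< \beta(1-\lambda)$.
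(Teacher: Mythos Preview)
Your proposal is correct and follows essentially the same route as the paper: the paper's potential is $\Phi(H) = (\beta - \tfrac12)\sum_v \deg_H(v) - \sum_{(u,v)\in E_H}(\deg_H(u)+\deg_H(v))$, which is exactly twice the final potential you land on, and the three steps you outline --- the $\deg_H(v)\le\beta$ bound, the per-move increase of at least $1$, and the $n\beta^2$ upper bound --- match the paper's argument verbatim. Your initial false start with coefficient $\beta$ rather than $\beta-\tfrac12$ (and the sign-flipped intermediate guess) is the only wobble, and you correctly diagnose that tuning the coefficient to sit between the insertion and deletion thresholds is what makes both move types strictly increase the potential.
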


\begin{proof}
The proof is similar to that of Proposition 2.4 in \cite{AssadiB19}. Define the following potential functions $\Phi_1(H) = (\beta - 1/2) \cdot \sum_{v \in V_H} \deg_H(v)$, $\Phi_2(H) = \sum_{(u,v) \in E_H} \deg_H(u) + \deg_H(v)$, and the main potential function $\Phi(H) = \Phi_1(H) - \Phi_2(H)$. Note that initially $H$ is empty so $\Phi(H) = 0$. We claim that at all times $\Phi(H) \leq \Phi_1(H) \leq n\beta^2$. To see this, note that every vertex $v \in V_H$ always has $\deg_H(v) \leq \beta$, because as long as $\deg_H(v) = \beta$, the adversary cannot perform any insertion moves incident to $v$. In the rest of the proof, we show that every Insertion/Deletion move increases $\Phi(H)$ by at least 1; combined with the fact that at all times $0 \leq \Phi(H) \leq n\beta^2$, we get that there are at most $n\beta^2$ moves in total.

Consider any Deletion Move of edge $(u,v)$. Clearly $\Phi_1(v)$ decreases by exactly $2\beta - 1$. We now show that $\Phi_2(v)$ decreases by at least $2\beta$. One the one hand, $\Phi_2(v)$ decreases by at least $\beta + 1$ because edge $(u,v)$ no longer participates in the sum, and $\deg_H(u) + \deg_H(v)$ was $> \beta$ before the deletion. But at the same time, since $\deg_H(u) + \deg_H(v) \geq \beta + 1$ before the deletion, there are at least $\beta - 1$ edges other than $(u,v)$ incident to $u$ or $v$, and each of their edge degrees decrease by 1 in the sum for $\Phi_2(H)$. Thus, $\Phi_2(H)$ decreases by at least $\beta + 1 + (\beta - 1) = 2\beta$, while $\Phi_1(H)$ decreases by exactly $2\beta - 1$, so overall $\Phi(H) = \Phi_1(H) - \Phi_2(H)$ increases by at least one.

Similarly, consider any Insertion Move of edge $(u,v)$. Clearly $\Phi_1(v)$ increases by exactly $2\beta - 1$. We now show that $\Phi_2(v)$ increases by at most $2\beta-2$. Recall that $\deg_H(u) + \deg_H(v) \leq \beta - 2$ before the insertion, so after the insertion we have that $\deg_H(u) + \deg_H(v) \leq \beta$, so the edge $(u,v)$ itself contributes at most $\beta$ to the sum in $\Phi_2$. There are also at most $\beta - 2$ edges other than $(u,v)$ incident to $u$ or $v$, each of whose edge degrees increases by 1. Thus, overall, $\Phi_2(H)$ increases by at most $\beta + (\beta - 2) = 2\beta - 2$, so $\phi(H)$ increases by at least $(2\beta-1) - (2\beta -2) = 1$.
\end{proof}

\begin{proof}[Proof of Lemma \ref{lem:phase1}]
\newcommand{\eventa}{\mathcal{A}}
\newcommand{\eventb}{\mathcal{B}}
\newcommand{\eremain}{E^r}
\newcommand{\eepoch}{E^e}
\newcommand{\ebad}{E^u}
Property \ref{phase1-prop2a} is clearly satisfied by construction, because after any insertion to $H$ the algorithm runs {\em RemoveOnderfullEdges(H)} (line 7) to ensure that $H$ has bounded edge-degree $\beta$. As a result, we clearly have that every \emph{vertex} degree is at most $\beta$, so Phase I needs only $O(n\beta)$ space to store $H$.

For the proof of Property \ref{phase1-prop1}, observe that any changes the algorithm makes to H follow the rules for Insertion/Deletion moves from Lemma \ref{lem:edcs-changes}, so Algorithm \ref{alg:main} makes at most $n\beta^2$ changes to $H$. (Line 5 of Phase I corresponds to deletion moves in Lemma \ref{lem:edcs-changes}, while line 2 of {\em RemoveOverfullEdges($H$)} corresponds to insertion moves. Note that line 5 of phase I actually obeys an even stronger inequality than deletion moves, since $\beta(1-\lambda) < \beta -1$.) Each epoch that does not terminate Phase I makes at least one change to $H$, so phase I goes through at most $n\beta^2 + 1$ epochs before termination. Each epoch contains $\alpha$ edges, so overall Phase I goes through at most $\alpha (n \beta^2 + 1) = \eps m$ edges, as desired.

All that remains is to prove Property \ref{phase1-prop2b}. As mentioned above, the intuition is simple: the algorithm only exits Phase I if it fails to find a single underfull edge in the entire epoch (Line 9), and since the stream is random, such an event implies that there are probably relatively few underfull edges left in the stream. We now formalize this intuition.

Let $\eventa_i$ be the event that \foundu\ is set to FALSE in epoch $i$. Recall that epoch $i$ ends on edge $e_{i\alpha}$; let $\eventb_i$ be the event that the number of $(G_{>i\alpha},H,\beta, \lambda)$-underfull edges is more than $\gamma$. Note that Property \ref{phase1-prop2b} fails to hold if and only if we have $\eventa_i \land \eventb_i$ for some $i$, so we now upper bound $\Pr[A_i \land B_i]$. Our bound relies on the randomness of the stream. Let $\eremain_i$ contain all edges in the graph that have not yet appeared in the stream at the \emph{beginning} of epoch $i$ (r for remaining). Let $\eepoch_i$ be the edges that appear in epoch $i$ (e for epoch), and note that $\eepoch_i$ is a subset of size $\alpha$ chosen uniformly at random from $\eremain_i$. Define $H_i$ to be the subgraph $H$ at the beginning of epoch $i$, and define $\ebad_i \subseteq \eremain_i$ to be the set $\{(u,v) \in \eremain_i \ | \ \deg_{H_i}(u) + \deg_{H_i}(v) < \beta(1-\lambda)\}$ (u for underfull). Observe that because of event $\eventa_i$, the graph $H$ does not change throughout epoch $i$, so an edge that is underfull at any point during the epoch will be underfull at the end as well. Thus, $\eventa_i \land \eventb_i$ is equivalent to the event that $|\ebad_i| > \gamma$ but $\ebad_i \cap \eepoch_i = \emptyset$. 

Let $\eventa^k_i$ be the event that the kth edge of epoch $i$ is not in $\ebad_i$. We have that 
$$\Pr[\eventb_i \land \eventa_i] \leq \Pr[\eventa_i \cond \eventb_i] = \Pr[\eventa^1_i \cond \eventb_i] \prod_{k=2}^\alpha \Pr[\eventa^k_i \cond \eventb_i, \eventa^1_i, \ldots, \eventa^{k-1}_i].$$ 
Now, observe that
$$\Pr[\eventa^1_i \cond \eventb_i] < 1 - \frac{\gamma}{m}$$ 
because the first edge of the epoch is chosen uniformly at random from the set of $\leq m$ remaining edges, and the event fails if the chosen edge is in $\ebad_i$, where $|\ebad_i| > \gamma$ by definition of $\eventb_i$. Similarly, for any $k$, 
$$\Pr[\eventa^k_i \cond \eventb_i, \eventa^1_i, \ldots, \eventa^{k-1}_i] < 1 - \frac{\gamma}{m}$$
because conditioning on the previous events $\eventa^j_i$ implies that no edge from $\ebad_i$ has yet appeared in this epoch, so there are still at least $\gamma$ edges from $\ebad_i$ left in the stream.

Recall from Definition \ref{dfn:parameters} that $\gamma = 5\log(n)\cdot \frac{m}{\alpha}$. Combining the three above equations yields that $\Pr[\eventb_i \land \eventa_i] \leq (1-\frac{\gamma}{m})^{\alpha} = (1 - \frac{5\log(n)}{\alpha})^\alpha \leq n^{-5}$. There are clearly at most $n^2$ epochs, so union bounding over all of them shows that Property \ref{phase1-prop2b} fails with probability at most $n^{-3}$, as desired.

\end{proof}

\section{Open Problems}
We presented a new single-pass streaming algorithm for computing a maximum matching in a \emph{random-order} stream. The algorithm achieves a $(2/3-\eps)$-approximation using $O(n\log(n))$ space; these bounds improve upon all previous results for the problem.

But while $2/3$ is a natural boundary, there is no reason to believe it is the best possible. Is there an algorithm with approximation ratio $2/3 + \Omega(1)$? Is it possible to compute a $(1-\eps)$-approximate matching in random-order streams? A lower bound of $1 - \Omega(1)$ in this setting would also be extremely interesting.

Another natural open problem is get improved bounds for weighted graphs. Gamlath et al \cite{GamlathKMS19} recently broke through the barrier of $1/2$ and presented an algorithm for weighted graphs that computes a $.506$-approximation (or $.512$ in bipartite graphs) in random-order streams. Can we improve the approximation ratio to $2/3$ in weighted graphs? To $(1-\eps)$?

\section{Acknowledgments} I want to thank Sepehr Assadi for several very helpful discussions.

\bibliography{general}

\end{document}